\documentclass[twocolumn,conference,letterpaper]{IEEEtran}
\usepackage[left=0.75in,right=0.75in,top=1in,bottom=0.75in]{geometry}
\usepackage{comment}

\usepackage{cite}
\usepackage{amssymb,amsmath,latexsym,amsfonts,amsthm,mathtools}

\usepackage{graphicx}
\usepackage{float,subcaption}
\usepackage{textcomp}
\usepackage{xcolor}
\definecolor{darkpastelpurple}{rgb}{0.59, 0.44, 0.84}
\usepackage{hyperref}
\hypersetup{colorlinks, breaklinks, citecolor=blue, linkcolor=blue, urlcolor=blue}
\usepackage[nameinlink]{cleveref}
\Crefformat{figure}{#2Fig.~#1#3}
\Crefmultiformat{figure}{Figs.~#2#1#3}{ and~#2#1#3}{, #2#1#3}{ and~#2#1#3}

\crefname{assumption}{assumption}{assumptions}  % singular, plural
\Crefname{assumption}{Assumption}{Assumptions} 
\usepackage{tikz}
\usetikzlibrary{automata, shapes, arrows, calc, arrows.meta, fit, positioning}

\theoremstyle{plain}
\newtheorem{theorem}{Theorem}

\newtheorem*{problem*}{Problem}
\theoremstyle{remark}
\newtheorem{remark}{Remark}
\newtheorem{assumption}{Assumption}

\newtheorem{definition}{Definition}
\theoremstyle{definition}

\usepackage{mathrsfs}

\usepackage{newtxmath}
\usepackage{booktabs}
\usepackage{duckuments}

\IEEEoverridecommandlockouts

\begin{document}
	% \title{Impact Time Control Using True Proportional Navigation Guidance with Bounded Input}
    \title{Bounded-Input True Proportional Navigation for Impact-Time Control}
    %\title{Nested Sliding Mode Based Impact Time Regulation Using True Proportional Navigation Guidance With Input Constraints}
	\author{Lohitvel Gopikannan, Shashi Ranjan Kumar,~\IEEEmembership{Senior Member, IEEE}, and Abhinav Sinha,~\IEEEmembership{Senior Member,~IEEE}
		\thanks{L. Gopikannan and S. R. Kumar are with the Intelligent Systems \& Control (ISaC) Lab, Department of Aerospace Engineering, Indian Institute of Technology Bombay, Mumbai 400076, India. (e-mails: 24m0023@iitb.ac.in, srk@aero.iitb.ac.in). A. Sinha is with the Guidance, Autonomy, Learning, and Control for Intelligent Systems (GALACxIS) Lab, Department of Aerospace Engineering and Engineering Mechanics, University of Cincinnati, OH 45221, USA. (e-mail: abhinav.sinha@uc.edu).}
	}

	\maketitle
	\thispagestyle{empty}
	
	\begin{abstract}
This paper proposes a nonlinear guidance strategy capable of intercepting a constant-velocity, non-maneuvering target while strictly satisfying the prescribed bounds on the control input (commanded acceleration). Unlike conventional strategies that estimate time-to-go using linearization or small-angle approximations, the proposed strategy employs true proportional-navigation guidance (TPNG) as a baseline, which utilizes an exact time-to-go formulation and is applicable over a wide range of target motions. In contrast to most existing strategies, which do not incorporate control input bounds into the guidance design, the proposed approach explicitly accounts for these limits by modeling the interceptor acceleration as a dynamic variable. Based on the sliding mode control technique, an effective guidance law that achieves time-constrained interception while accounting for bounded input is then derived. The performance of the proposed strategy is evaluated for various engagement scenarios.
\end{abstract}
	
	\begin{IEEEkeywords}
    Impact time Guidance, Bounded Input, Moving Target Interception, Uncrewed Autonomous Vehicles.
	\end{IEEEkeywords}
	
	\section{Introduction}\label{sec:introduction}
The increasing complexity of modern engagement scenarios necessitates the development of advanced guidance strategies capable of ensuring precise interception while simultaneously satisfying stringent terminal constraints, such as prescribed impact angle \cite{lin2020impact,kumar2019finite} and impact time \cite{jeon2006impact,jeon2016impact}. Precise control of impact time is particularly important in coordinated salvo attacks \cite{zhou2016}, where the simultaneous arrival of multiple interceptors is essential to saturate defensive systems such as close-in weapon systems, which can engage only a limited number of threats at any given time.

Although one of the applications of impact-time control lies in cooperative multi-interceptor scenarios, the fundamental problem can be effectively formulated and analyzed within a single interceptor-target engagement. Early efforts toward impact time regulation focused on stationary target interception, where \cite{jeon2006impact} proposed a guidance strategy combining a proportional-navigation (PN) guidance term with an error feedback term to achieve interception at a prescribed time within a linearized engagement framework. A nonlinear extension of this approach was subsequently presented in \cite{jeon2016impact}.  The work in \cite{cho2016modified} obtained an exact closed-form time-to-go for  pure PN guidance applicable to all guidance gains greater than 1 and all initial heading errors under nonlinear engagement kinematics, and proposed a time-varying gain impact time control law. Look-angle shaping-based guidance laws were proposed in \cite{tekin2017polynomial,tekin2017adaptive,tekin2018impact}, employing arbitrary-order polynomial functions to control the impact time. Nevertheless, higher-order polynomial functions introduce increased computational demands that may complicate practical implementation. Lyapunov-based impact-time guidance strategies were studied in \cite{yanushevsky2005new,kim2015lyapunov,saleem2016lyapunov}. However, the range of attainable impact times is limited as the look angle decreases monotonically during the engagement. Sliding mode control (SMC) has also been extensively employed in the design of impact-time guidance laws \cite{kumar2016impact,cho2016nonsingular,hu2019sliding} owing to its robustness against modeling uncertainties and external disturbances. The authors in \cite{kumar2016impact} constructed a sliding surface based on the impact time error and line-of-sight (LOS) rate, which was subsequently simplified in \cite{cho2016nonsingular} to incorporate only the impact time error. A super-twisting sliding mode–based guidance law was proposed in \cite{sinha2020super} without small-angle assumptions, ensuring finite-time convergence, robustness to uncertainties, and effectiveness even under large initial heading errors. The majority of the above-mentioned guidance strategies were derived for stationary targets, whereas \cite{jeon2016impact,kumar2016impact,sinha2020super,doi:10.2514/1.G007122} employed the predicted interception point (PIP) approximation for moving targets. This approach, however, requires approximating the target's future position for moving targets, potentially introducing estimation errors that compromise interception performance.

For interception of moving targets, a terminal SMC-based strategy was presented in \cite{hu2019sliding}; however, the method requires offline parameter tuning, making it computationally intensive. The authors in \cite{tekin2016control} proposed a structured multi-phase guidance scheme without any time-to-go information. As an alternative to PN guidance, a deviated pursuit-based impact time guidance law was proposed in \cite{kumar2019deviated}, offering an exact time-to-go formulation. This was subsequently extended to incorporate interceptor dynamics in \cite{sinha2021impact,sinha2021nonsingular}. However, the deviated pursuit strategy is only applicable when the interceptor has a speed advantage over the target and is ineffective against stationary targets. It should be noted that most of the aforementioned strategies are applicable only to constant-velocity vehicles or those in which control is achieved solely through lateral acceleration. The work in \cite{kumar2022true} addressed this limitation by proposing a TPNG-based impact time guidance law with acceleration components in both radial and tangential directions, extending applicability to a broader class of target motions. The time-critical TPNG was then discussed in \cite{10.1115/1.4066259} under optimal time-to-go error dynamics.

From the above discussions, it is clear that significant work has been done toward intercepting stationary and moving targets at a prespecified impact time. However, one common limitation of these strategies is that they are designed under the assumption of unbounded control authority, which is unrealistic given the physical limitations of the interceptor's actuators. When the commanded control exceeds practically achievable limits, actuator saturation occurs, degrading overall closed-loop performance as discussed in \cite{dong2018guidance} and potentially leading to mission failure. Although actuator constraints are commonly handled through static nonlinearities in most of the works, this approach results in sluggish response and degraded performance, without guaranteeing closed-loop stability.

Motivated by the above limitations, this paper proposes a nonlinear guidance strategy that explicitly incorporates bounds on available control input within the guidance design for the interception of a constant-velocity, non-maneuvering target at a prespecified impact time. Unlike most of the existing approaches, the proposed method is applicable to vehicles capable of generating control inputs in both radial and tangential directions, thereby accommodating a broader class of autonomous platforms. The resulting guidance law ensures stable closed-loop behavior and accurate interception at the desired impact time, enhancing reliability and practical applicability under realistic actuator constraints.

\section{Background and Problem Formulation}
\begin{figure}[h!]
    \centering
    \includegraphics[width=0.9\linewidth]{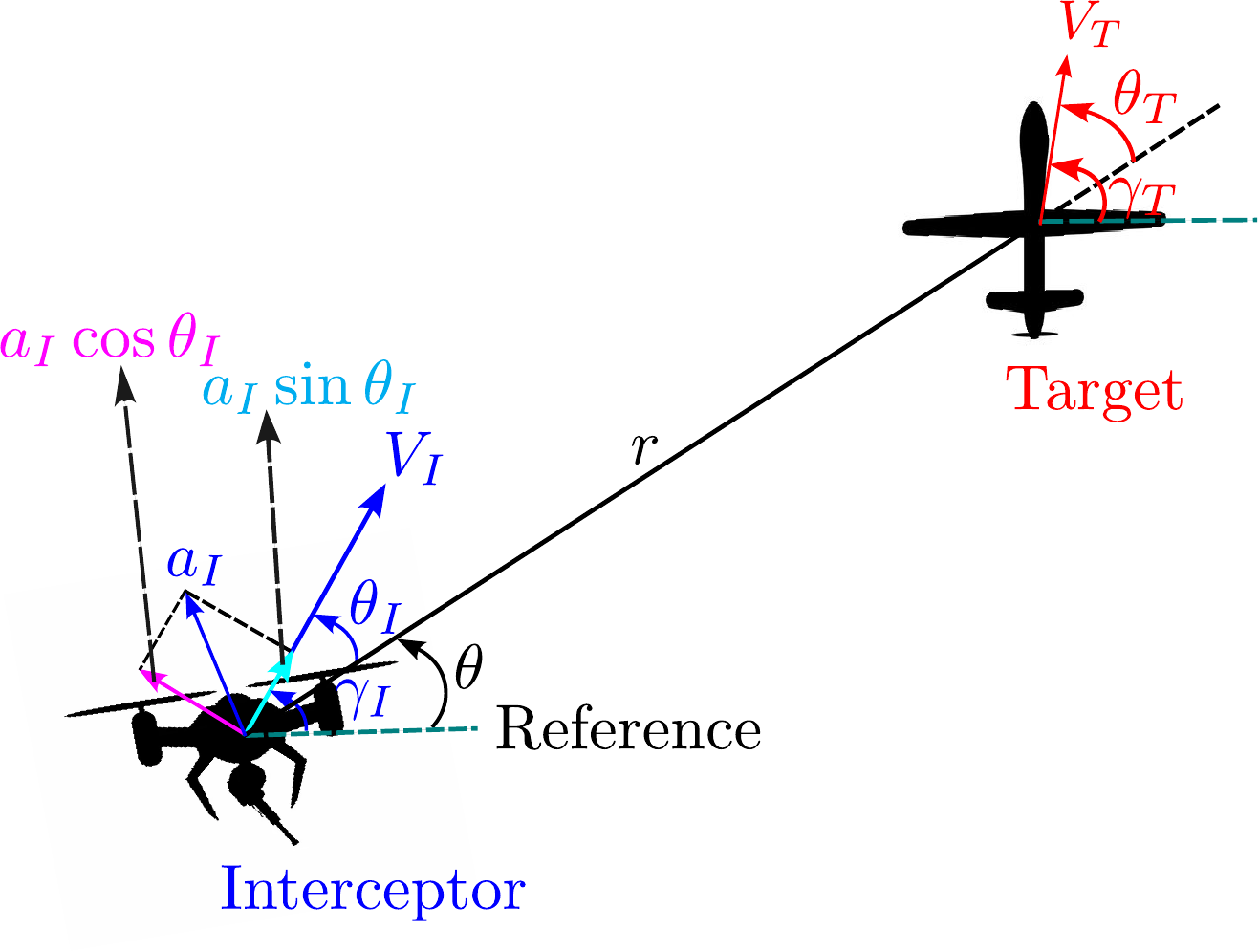}
    \caption{Planar interceptor-target engagement}
    \label{fig:enggeo}
\end{figure}
We consider a planar interception problem in which an uncrewed autonomous vehicle (UAV) engages a constant-velocity, non-maneuvering target as depicted in \Cref{fig:enggeo}. The interceptor and target speeds are defined as $V_I$ and $V_T$, respectively, whereas $\gamma_I$ and $\gamma_T$ denote their corresponding flight path angles. The relative separation between the interceptor and target is defined as $r$, with its LOS oriented at an angle $\theta$ from the reference. The interceptor adjusts its trajectory by generating an acceleration command $a_I$, which is applied perpendicular to LOS. The nonlinear relative kinematic equations governing the engagement dynamics are:
\begin{subequations}
\begin{align}
% \dot{r} &= V_{T}\cos(\gamma_{T} - \theta) 
%               - V_{M}\cos(\gamma_{M} - \theta) 
%               = V_{r},\\
% r\,\dot{\theta} &= V_{T}\sin(\gamma_{T} - \theta) 
%               - V_{M}\sin(\gamma_{M} - \theta) 
%               = V_{\theta}, \\
% \dot{\gamma}_{M} &= \frac{a_{M}\cos(\gamma_{M} - \theta)}{V_{M}}, \\
% \dot{V}_{M} &= a_{M}\sin(\gamma_{M} - \theta).
\dot{r} &= V_{T}\cos\theta_T 
              - V_{I}\cos\theta_I 
              = V_{r},\\
r\,\dot{\theta} &= V_{T}\sin\theta_T  
              - V_{I}\sin\theta_I
              = V_{\theta}, \\
\dot{\gamma}_{I} &= \frac{a_{I}\cos\theta_I}{V_{I}}, \\
\dot{V}_{I} &= a_{I}\sin\theta_I.\label{eq:vmdot}
\end{align}
\label{eq:enggeo}
\end{subequations}
where $\theta_I = \gamma_I - \theta$ and $\theta_T = \gamma_T - \theta$ represent the lead angles of the interceptor and target, respectively, and $V_r$ and $V_\theta$ correspond to the radial and tangential components of the relative velocity with respect to the LOS, satisfying
\begin{subequations}\label{eq:RV_dot}
\begin{align}
    \dot V_r&=\dot \theta V_\theta\\
    \dot V_\theta &=-\dot \theta V_r-a_I
\end{align}
\end{subequations}

As evident from \eqref{eq:vmdot}, the interceptor's speed changes with time due to the longitudinal acceleration component. This characteristic makes the formulation well-suited for a large class of autonomous vehicles such as quadcopters and uncrewed surface vessels (USVs) \cite{verma2025true}, which can modulate their thrust or propulsion to adjust both speed and trajectory during maneuvers. Before proceeding further, the key assumptions and definitions adopted in this study are stated as follows.
\begin{assumption}
The target and interceptor are treated as ideal point-mass vehicles. Interceptor's dynamics is neglected during the guidance design, assuming a sufficiently fast autopilot response.
\end{assumption}
\begin{assumption}
Range and LOS information are available to the interceptor throughout engagement. 
\end{assumption}
\begin{definition}
For an interceptor-target engagement, the impact time $t_{d}$ denotes the total duration from the interceptor's launch to the instant of target interception. This can be expressed  mathematically as $t_{d} = t_{\mathrm{el}} + t_{\mathrm{go}}$, where $t_{\mathrm{el}}$ denotes the elapsed time since launch and $t_{\mathrm{go}}$ represents the remaining time until interception.\end{definition}
We are now in a position to formally define the problem addressed in this work.
\begin{problem*}
For the planar interceptor-target engagement depicted in \Cref{fig:enggeo}, the objective of this paper is to develop a  nonlinear guidance strategy that guarantees interception of a constant velocity non-maneuvering target while satisfying the following constraints: (i) $r\to0 \text{ as } t\to t_{d}$ (target interception), and (ii) $a_{\min}<a_I<a_{\max}~ \forall\; t\in [0,t_{d}]$ (bounds on control input), where $a_{\max}>0$ and $a_{\min}<0$.
% \begin{itemize}
%     \item Target interception should occur at a desired time $t_d$ specified prior to the engagement, i.e.,
%     \begin{align}
%         r\to0\quad \text{as}\quad t\to t_d
%     \end{align}
%     \item The interceptor's acceleration $a_I$ must remain within prescribed bounds throughout the engagement. Specifically, it should satisfy 
%        \begin{align}
%            a_{\min}<a_I<a_{\max}\quad \forall\; t\in [0,t_d]
%        \end{align}
%        where $a_{\max}>0$ and $a_{\min}<0$.
% \end{itemize}
\end{problem*}
\begin{remark}
Most of the existing guidance strategies employ symmetric bounds on lateral acceleration. However, in practice, actuator systems exhibit asymmetric control capabilities due to direction-dependent limitations. In addition, safety and operational considerations may require conservative bounds in certain directions while permitting larger control authority in others. As a result, asymmetric constraints more accurately reflect realistic actuation limits.
\end{remark}
% For the planar engagement geometry depicted in \Cref{fig:enggeo}., the objective is to develop a nonlinear guidance strategy that guarantees interception at the desired impact time $t_d$ while respecting bounds on the interceptor's acceleration.
In this paper, TPNG is adopted as a baseline guidance strategy due to the availability of acceleration components in both the axial and radial directions relative to the velocity vector, and its applicability to a wide range of target motions. 
\section{Design of Guidance Strategy}
This section first introduces the input saturation model, followed by the design of a guidance law that satisfies the prescribed constraints. Existing approaches commonly handle acceleration constraints by applying saturation functions during implementation. However, overall system stability is not guaranteed as the underlying analyses assume unbounded control inputs. In contrast, the proposed guidance strategy explicitly incorporates bounded acceleration while ensuring closed-loop stability. To achieve this, and motivated by the approach in \cite{kumar2025provably}, $a_I$ is modeled using a first-order asymmetric smooth input saturation model as
\begin{align}\label{eq:saturation_model}
    \dot a_{I}&=\left[\mu\left\{1-\left(\dfrac{a_I}{a_{\max}}\right)^\rho\right\}+(1-\mu)\left\{1-\left(\dfrac{a_I}{a_{\min}}\right)^\rho\right\}\right]a_I^c-\lambda a_I
\end{align}
where $a_I^c$ is the commanded input from the guidance strategy and $a_I$ is the achieved acceleration, $\mu=\begin{cases}
    1\quad a_I\geq 0\\0\quad a_I<0
\end{cases}$, $\lambda\in(0,1)$and $\rho=2n$ where $n\in \mathbb{N}$. The detailed analysis demonstrating the confinement of $a_I$ within the prescribed bounds $\forall t$ is presented in the following theorem.
\begin{theorem}\label{th:theorem1}
Under the saturation model \eqref{eq:saturation_model}, finiteness of the commanded input $a_I^c$ guarantees that $a_I$ is constrained to the set $\mathbb{S}\coloneqq\{a_I\in \mathbb{R}~|\;a_{\min}<a_I<a_{\max}\}$  throughout the engagement.
\end{theorem}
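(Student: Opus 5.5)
The plan is a forward-invariance (barrier-type) argument for the scalar ODE \eqref{eq:saturation_model}, with $a_I^c$ viewed as a bounded exogenous signal, $|a_I^c(t)|\le \bar a$ on $[0,t_d]$. We assume the engagement starts with $a_I(0)\in\mathbb{S}$; typically $a_I(0)=0$, which lies in $\mathbb{S}$ because $a_{\min}<0<a_{\max}$. First we check that the right-hand side is well defined and locally Lipschitz in $a_I$. The switch $\mu$ occurs at $a_I=0$, and there both bracketed factors equal $1$, so the gain
\[
g(a_I)=\mu\bigl\{1-(a_I/a_{\max})^\rho\bigr\}+(1-\mu)\bigl\{1-(a_I/a_{\min})^\rho\bigr\}
\]
is continuous and piecewise polynomial, hence locally Lipschitz. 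If $a_I^c$ is piecewise continuous in $t$, this gives existence and uniqueness of a Carathéodory solution on $[0,t_d]$, at least up to a possible escape time. Escape is ruled out once confinement to $\mathbb{S}$ is shown.

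The key structural observation concerns the sign of $g$. On $[0,a_{\max})$ we have $g(a_I)=1-(a_I/a_{\max})^\rho>0$, and $g(a_{\max})=0$. On $(a_{\min},0)$ we have $g(a_I)=1-(a_I/a_{\min})^\rho>0$, and $g(a_{\min})=0$. Since $\rho=2n$ is even, these quantities are well defined and their signs depend only on $|a_I|$ compared with $|a_{\max}|$ or $|a_{\min}|$.

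\textbf{Upper boundary.} As $a_I\to a_{\max}^-$, the input term $g(a_I)a_I^c$ tends to $0$ because $a_I^c$ is bounded, while $-\lambda a_I\to-\lambda a_{\max}<0$. By continuity there is $\epsilon>0$ such that $\dot a_I<-\lambda a_{\max}/2<0$ for all $a_I\in[a_{\max}-\epsilon,a_{\max})$, uniformly in $t$. To make this quantitative we use $g(a_I)\le \rho\,(a_{\max}-a_I)/a_{\max}$ for $a_I\in[0,a_{\max}]$ and choose $\epsilon$ with $\rho\bar a\epsilon/a_{\max}<\lambda(a_{\max}-\epsilon)/2$.

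\textbf{Lower boundary.} Symmetrically, as $a_I\to a_{\min}^+$, $\dot a_I\to-\lambda a_{\min}>0$, so $\dot a_I>0$ in a neighbourhood of $a_{\min}$.

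\textbf{Conclusion by contradiction.} Suppose some $t^*$ is the first time with $a_I(t^*)\in\{a_{\min},a_{\max}\}$, say $a_{\max}$. Then $a_I$ enters the band $[a_{\max}-\epsilon,a_{\max})$ and is strictly decreasing there, so it cannot reach $a_{\max}$, which is a contradiction. Equivalently, a Lyapunov/Nagumo argument applies: the barrier $W=\tfrac12\ln\bigl(a_{\max}^2/(a_{\max}^2-a_I^2)\bigr)$ on the positive side has $\dot W<0$ near the boundary. Hence $a_I(t)\in\mathbb{S}$ for all $t\in[0,t_d]$. Boundedness also precludes finite escape, so the solution exists on the whole interval.

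The main obstacle is making the boundary argument rigorous uniformly in time. It needs a uniform bound $\bar a$ on $a_I^c$; mere pointwise finiteness is not enough. In the proof we state this as the interpretation of ``finiteness'' and note that the closed-loop command derived later is bounded along trajectories.

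A secondary subtlety is that the true equilibrium of the leak term with zero input lies strictly inside $\mathbb{S}$. The strict inequality $\dot a_I<0$ at $a_{\max}$ itself relies on $\lambda>0$, and this is exactly where the $-\lambda a_I$ term is essential. Without it, $a_{\max}$ would only be an equilibrium and the bound would be non-strict.
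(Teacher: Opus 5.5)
Your proof is correct and is essentially the paper's argument. Both use the uniform bound $|a_I^c|\le\chi$ together with $\lambda>0$ to show that $\dot a_I$ points inward on a band next to each bound: the paper does this via $x^\rho\le x$, which gives the explicit threshold $\tilde a_{\max}=a_{\max}[\chi/(\chi+\lambda a_{\max})]^{1/\rho}$, and you do it via $1-x^\rho\le\rho(1-x)$ plus a first-hitting-time argument. Your symmetric treatment of $a_{\min}$ also avoids the sign slip in the paper's $\tilde a_{\min}$, whose denominator should read $\chi+\lambda|a_{\min}|$.

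Two side remarks should go, though. First, do not assert that the later closed-loop command is bounded: its denominator $\mathcal{B}^\star$ contains the saturation factor that vanishes at $a_{\min}$ and $a_{\max}$, so finiteness must stay a hypothesis, as in the theorem. Second, $\lambda>0$ is not what makes the bound strict on a finite horizon. For $\lambda=0$ the constant $a_{\max}$ is itself a solution of an ODE that is Lipschitz in $a_I$ uniformly in $t$, so uniqueness already forbids reaching it; what the leak term buys is the time-uniform margin $\tilde a_{\max}<a_{\max}$.
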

\begin{proof}
    The finiteness of the commanded input implies that $|a_{I}^c|\leq\chi\subseteq \mathbb{R}$,$\forall \;t\geq0$. If $a_I=a_{\max}$ then from \eqref{eq:saturation_model} we have $\dot a_I=-\lambda a_{\max}$, consequently if $a_I$ reaches $a_{\max}$ then it will start decreasing.  Similarly, if $a_I=a_{\min}$ from \eqref{eq:saturation_model} we have $\dot a_I=-\lambda a_{\min}$, since $a_{\min}<0$, $a_{I}$ will start on increasing once it reaches $a_{\min}$. Now will establish that $a_I$ never reaches $a_{\max}$ or $a_{\min}$. Consider the case of $a_I\in[0, a_{\max})$, then one can write \eqref{eq:saturation_model} as
    \begin{align}\label{eq:satuaration_model_right}
        \dot a_I&=\left[1-\left(\dfrac{a_I}{a_{\max}}\right)^\rho\right]a_I^c-\lambda a_I.
    \end{align}
It is clearly evident from \eqref{eq:satuaration_model_right} that $a_I$ can increase only iff $a_I^c>0$, since $\left[1-\left(\dfrac{a_I}{a_{\max}}\right)^\rho\right]\geq0$ for  $a_I\in[0, a_{\max})$. Let us assume that $a_I^c>0$, then using the fact that $|a_I^c|\leq\chi$, one can modify \eqref{eq:satuaration_model_right} as
\begin{align}
    \dot a_I&\leq\left[1-\left(\dfrac{a_I}{a_{\max}}\right)^\rho\right]\chi-\lambda a_I\nonumber\\
           &=\chi\left[1-\left(\dfrac{a_I}{a_{\max}}\right)^\rho-\dfrac{\lambda a_{\max}}{\chi}\left(\dfrac{a_I}{a_{\max}}\right)\right].
\end{align}
Since $-\left(\dfrac{a_I}{a_{\max}}\right)^\rho\geq-\left(\dfrac{a_I}{a_{\max}}\right)$, the above equation can be simplified to
\begin{align}
\dot a_I&\leq\chi\left[1-\left(\dfrac{a_I}{a_{\max}}\right)^\rho\left(1+\dfrac{\lambda a_{\max}}{\chi}\right)\right].
\end{align}
From the above equation, it is clear that $\dot a_I\leq0$ iff $\left[1-\left(\dfrac{a_I}{a_{\max}}\right)^\rho\left(1+\dfrac{\lambda a_{\max}}{\chi}\right)\right]\leq0$. On solving for $a_I$, one can obtain $a_I\geq a_{\max}\left[\dfrac{\chi}{\chi+\lambda a_{\max}}\right]^{\frac{1}{\rho}}:=\tilde a_{\max}<a_{\max}$. This implies that for $a_I\geq\tilde a_{\max}$, $\dot a_I\leq0$, therefore we can conclude  that $a_I\leq {\tilde a}_{\max}<a_{\max}$. Now consider the case of $a_I<0$, then \eqref{eq:saturation_model} reduces to
    \begin{align}\label{eq:satuaration_model_left}
        \dot a_I&=\left[1-\left(\dfrac{a_I}{a_{\min}}\right)^\rho\right]a_I^c-\lambda a_I.
    \end{align}
It has been established earlier that $a_I$ will start increasing once it reaches $a_{\min}$, therefore consider the case of $a_I\in(a_{\min},0)$. Since $\left[1-\left(\dfrac{a_I}{a_{\min}}\right)^\rho\right]\geq0$, one can observe from \eqref{eq:satuaration_model_left} that $a_I$ will decrease iff $a_I^c<0$, Without loss of generality assume $a_I^c$, then using the fact that $-\chi\leq a_I^c$ one can write \eqref{eq:satuaration_model_left} as
\begin{align}
    \dot a_I&\geq-\left[1-\left(\dfrac{a_I}{a_{\min}}\right)^\rho\right]\chi-\lambda a_I\nonumber\\
           &=\chi\left[-1+\left(\dfrac{a_I}{a_{\min}}\right)^\rho-\dfrac{\lambda a_{\min}}{\chi}\left(\dfrac{a_I}{a_{\min}}\right)\right].
\end{align}
Since $a_I<0$, $\dfrac{\lambda  {a}_{\min}}{\chi} \left(\dfrac{a_I}{a_{\min}}\right)^\rho\leq-\dfrac{\lambda  {a}_{\min}}{\chi}\left(\dfrac{a_I}{a_{\min}}\right)$, the above equation can be simplified as
\begin{align}
\dot a_I&\geq\chi\left[-1+\left(\dfrac{a_I}{a_{\min}}\right)^\rho\left(1+\dfrac{\lambda a_{\min}}{\chi}\right)\right].
\end{align}
It is clear from above equation that $\dot a_I$ can increase iff $\left[-1+\left(\dfrac{a_I}{a_{\min}}\right)^\rho\left(1+\dfrac{\lambda a_{\min}}{\chi}\right)\right]\geq0$. On solving for $a_I$, one can obtain $a_I\geq a_{\min}\left[\dfrac{\chi}{\chi+\lambda a_{\min}}\right]^{\frac{1}{\rho}}\coloneqq \tilde a_{\min}>a_{\min} $. This implies that for $a_I\leq a_{\min}$, $\dot a_I\geq0$, which further implies that $a_I\geq\tilde a_{\min}>a_{\min}$. Therefore, from the above analysis, we can conclude that acceleration $a_I$ is confined to the set $\mathbb{S}\coloneqq\{a_I\in \mathbb{R}~|\;a_{\min}<a_I<a_{\max}\}$  throughout the engagement. This concludes the proof.
\end{proof}
In the analysis that follows, the aim is to design the commanded acceleration $a_I^c$ that guarantees the desired objectives. For an interceptor guided by TPNG, the time-to-go expression is given by
\begin{equation}\label{eq:tgo}
t_{\mathrm{go}}= - \frac{r \left(V_{r} + 2\,c \right)}{V_{\theta}^{2} + V_{r}^{2} + 2\,c\,V_{r}}.
\end{equation}
As established  in \cite{kumar2022true}, successful target interception is guaranteed iff the proportionality factor satisfies $ c \gg \dfrac{V_I+V_T}{2}$. Differentiating \eqref{eq:tgo}, one obtains
\begin{align}\label{eq:tgodynamics}
    \dot t_{\mathrm{go}}= -1+\dfrac{2c(V_r+2c)V_\theta^2}{\left(V_\theta^2+V_r^2+2cV_r\right)^2}-\dfrac{2(V_r+2c)V_\theta r}{\left(V_\theta^2+V_r^2+2cV_r\right)^2}a_I.
\end{align}
Let us define the impact time error as the difference between the time-to-go estimate and its desired value. Mathematically, it can be expressed as $e=t_{\mathrm{go}}-t_\mathrm{go}^d$ where $t_\mathrm{go}^d=t_d-t_\mathrm{{el}}$. On differentiating this error in time-to-go with respect to time and utilizing \eqref{eq:tgodynamics}, one can obtain
\begin{align}\label{eq:error_dynamics}
    \dot e&=1+\dot t_{\mathrm{go}} \nonumber\\
          &=\dfrac{2c(V_r+2c)V_\theta^2}{\left(V_\theta^2+V_r^2+2cV_r\right)^2}-\dfrac{2(V_r+2c)V_\theta r}{\left(V_\theta^2+V_r^2+2cV_r\right)^2}a_I\nonumber\\
          &=F+Ba_I.
\end{align}
where $F=\dfrac{2c(V_r+2c)V_\theta^2}{\left(V_\theta^2+V_r^2+2cV_r\right)^2}$ and $B=-\dfrac{2(V_r+2c)V_\theta r}{\left(V_\theta^2+V_r^2+2cV_r\right)^2}$. On further differentiating \eqref{eq:error_dynamics} with respect to time, one can obtain 
\begin{align}\label{eq:eddot_1}
    \ddot e=\dot F+ \dot Ba_I+B\dot a_I.
\end{align}
For brevity, the differentiation is performed term-wise, and the resulting components are combined at the end. On differentiating $F$ with respect to time, one can get
\begin{align}
    \dot F&= \dfrac{2c}{\left(V_\theta^2+V_r^2+2cV_r\right)^4}\left[\left(V_\theta^2 \dot V_r+2V_\theta \dot V_\theta\left(V_r+2c\right)\right)\right.\nonumber \\&\left(V_\theta^2+V_r^2+2cV_r\right)^2
    \left.-4V_\theta^2\left(V_r+2c\right)\left(V_\theta^2+V_r^2+2cV_r\right)\right.\nonumber\\
    &\left.\times\left(V_\theta \dot V_\theta+V_r \dot V_r+c\dot V_r\right)\right].
\end{align}
Simplifying the above expression yields,
\begin{align}
    \dot F&=\dfrac{2c\left(V_\theta^2 \dot V_r+2V_\theta \dot V_\theta(V_r+2c)\right)}{\left(V_\theta^2+V_r^2+2cV_r\right)^2}\nonumber\\
    &-\dfrac{8c(V_r+2c)V_\theta^2\left(V_\theta \dot V_\theta+V_r \dot V_r+c\dot V_r\right)}{\left(V_\theta^2+V_r^2+2cV_r\right)^3}\label{eq:F_dot}
\end{align}
Similarly, on differentiating $B$ with respect to time, one can obtain
\begin{align}
    \dot B&=\dfrac{-2}{\left(V_\theta^2+V_r^2+2cV_r\right)^4} \left[\left(r V_\theta \dot V_r+ \left(V_rV_\theta+r\dot V_\theta\right)\left(V_r+2c\right)\right)\right. \nonumber \\
    &\left.\left(V_\theta^2+V_r^2+2cV_r\right)^2-4V_\theta r\left(V_r+2c\right)\left(V_\theta^2+V_r^2+2cV_r\right)\right.\nonumber\\
    &\left.\times\left(V_\theta \dot V_\theta+V_r \dot V_r+c\dot V_r\right)\right].\label{eq;B_dot}
\end{align}
Simplifying the above equation results in 
\begin{align}
    \dot B&=-\dfrac{2\left(r V_\theta \dot V_r+ (V_rV_\theta+r\dot V_\theta)(V_r+2c)\right)}{\left(V_\theta^2+V_r^2+2cV_r\right)^2} \nonumber\\
        &+\dfrac{8(V_r+2c)V_\theta r\left(V_\theta \dot V_\theta+V_r \dot V_r+c\dot V_r\right)}{\left(V_\theta^2+V_r^2+2cV_r\right)^3}.
\end{align}
On rewriting \eqref{eq:eddot_1} using \eqref{eq:RV_dot},\eqref{eq:saturation_model},\eqref{eq:F_dot} and \eqref{eq;B_dot} yields
% \begin{align}
%        \ddot e&= \dfrac{2c\dot \theta V_\theta\left[V_\theta^2-2V_r^2-4cV_r\right]}{\left(V_\theta^2+V_r^2+2cV_r\right)^2}+\dfrac{-4cV_\theta(V_r+2c)a_I}{\left(V_\theta^2+V_r^2+2cV_r\right)^2}\nonumber\\
%        &-\dfrac{8c^2\left(V_r+2c\right)V_\theta^3\dot \theta }{\left(V_\theta^2+V_r^2+2cV_r\right)^3}+\dfrac{8c\left(V_r+2c\right)V_\theta^3a_I}{\left(V_\theta^2+V_r^2+2cV_r\right)^3}\nonumber\\
%     &-\dfrac{2a_I\left[V_\theta^3-r\left(V_r+2c\right)a_I\right]}{\left(V_\theta^2+V_r^2+2cV_r\right)^2}+\dfrac{8V_\theta^2a_I(V_r+2c)\left(cV_\theta-ra_I\right)}{\left(V_\theta^2+V_r^2+2cV_r\right)^3}\nonumber\\
%     &+\dfrac{2(V_r+2c)V_\theta r\lambda a_I}{\left(V_\theta^2+V_r^2+2cV_r\right)^2}-\dfrac{2(V_r+2c)V_\theta r}{\left(V_\theta^2+V_r^2+2cV_r\right)^2}\left[\mu\left\{1-\left(\dfrac{a_I}{a_{\max}}\right)^\rho\right\}\right. \nonumber\\
%     &\left.+(1-\mu)\left\{1-\left(\dfrac{a_I}{a_{\min}}\right)^\rho\right\}\right]a_I^c\nonumber
% \end{align}
\begin{align}
       \ddot e&= \dfrac{2c\dot \theta V_\theta\left[V_\theta^2-2V_r^2-4cV_r\right]}{\left(V_\theta^2+V_r^2+2cV_r\right)^2}+\dfrac{-4cV_\theta(V_r+2c)a_I}{\left(V_\theta^2+V_r^2+2cV_r\right)^2}\nonumber\\
       &-\dfrac{8c\left(V_r+2c\right)V_\theta^3\left[ c\dot \theta-a_I\right] }{\left(V_\theta^2+V_r^2+2cV_r\right)^3}-\dfrac{2a_I\left[V_\theta^3-r\left(V_r+2c\right)a_I\right]}{\left(V_\theta^2+V_r^2+2cV_r\right)^2}\nonumber\\
    &+\dfrac{8V_\theta^2a_I(V_r+2c)\left(cV_\theta-ra_I\right)}{\left(V_\theta^2+V_r^2+2cV_r\right)^3}+\dfrac{2(V_r+2c)V_\theta r\lambda a_I}{\left(V_\theta^2+V_r^2+2cV_r\right)^2}\nonumber\\
    &-\dfrac{2(V_r+2c)V_\theta r}{\left(V_\theta^2+V_r^2+2cV_r\right)^2}\left[\mu\left\{1-\left(\dfrac{a_I}{a_{\max}}\right)^\rho\right\}+(1-\mu)\right.\nonumber \\&\times\left.\left\{1-\left(\dfrac{a_I}{a_{\min}}\right)^\rho\right\}\right]a_I^c\nonumber\\
    &=  \mathcal{F^\star} +  \mathcal{B^\star} a_I^c ,\label{eq:eddot_2}
\end{align}
where $\mathcal{F^\star}$ and $\mathcal{B^\star}$ are defined as
\begin{align}
    \mathcal{F^\star}&=\dfrac{2c\dot \theta V_\theta\left[V_\theta^2-2V_r^2-4cV_r\right]}{\left(V_\theta^2+V_r^2+2cV_r\right)^2}+\dfrac{-4cV_\theta(V_r+2c)a_I}{\left(V_\theta^2+V_r^2+2cV_r\right)^2}\nonumber\\
       &-\dfrac{8c\left(V_r+2c\right)V_\theta^3\left[ c\dot \theta-a_I\right] }{\left(V_\theta^2+V_r^2+2cV_r\right)^3}-\dfrac{2a_I\left[V_\theta^3-r\left(V_r+2c\right)a_I\right]}{\left(V_\theta^2+V_r^2+2cV_r\right)^2}\nonumber\\
    &+\dfrac{8V_\theta^2a_I(V_r+2c)\left(cV_\theta-ra_I\right)}{\left(V_\theta^2+V_r^2+2cV_r\right)^3}+\dfrac{2(V_r+2c)V_\theta r\lambda a_I}{\left(V_\theta^2+V_r^2+2cV_r\right)^2}\\
    \mathcal{B^\star}&=-\dfrac{2(V_r+2c)V_\theta r}{\left(V_\theta^2+V_r^2+2cV_r\right)^2}\left[\mu\left\{1-\left(\dfrac{a_I}{a_{\max}}\right)^\rho\right\}+(1-\mu)\right.\nonumber \\&\times\left.\left\{1-\left(\dfrac{a_I}{a_{\min}}\right)^\rho\right\}\right]
\end{align}
It can be observed from \eqref{eq:eddot_2} that the error dynamics exhibits relative degree two with respect to $a_I^c$.  The guidance design problem, therefore, reduces to stabilizing the second-order nonlinear system \eqref{eq:eddot_2} through a suitable design of $a_I^c$. 
\begin{remark}
Although finite-time, fixed-time, and prescribed-time control techniques exist for stabilizing second-order systems, input constraints inherently introduce a minimum achievable settling time because the system cannot be driven faster than the available control authority permits. As a result, both lower and upper bounds on the settling time arise in practice. The exact characterization of this minimum time remains a challenging problem and is part of our further research.  Although asymptotic schemes are also subject to these limitations, they are relatively simpler to design and require less aggressive tuning. Therefore, an asymptotic formulation is adopted as a practical starting point.
\end{remark}
Consider the sliding surface,
\begin{align} \label{eq:sliding_manifold}
    S=\dot e+\alpha e.
\end{align}
On differentiating \eqref{eq:sliding_manifold} with respect to time one may obtain
\begin{align}\label{eq:S_dot}
    \dot S= \ddot e+\alpha \dot e= \mathcal{F^\star}+\alpha\left(F+Ba_I\right)+\mathcal{B^\star}a_I^c.
\end{align}
Now, we design a feedback control law $a_I^c$ that enables time-constrained interception while satisfying 
the bounded input constraint, whose formulation is presented next.
\begin{theorem}
    For the planar engagement problem under consideration, whose engagement kinematics is governed by \eqref{eq:enggeo}, if the guidance command (commanded input) is designed as 
    \begin{align}\label{eq:a_M}
        a_I^c=\dfrac{-\mathcal{F^\star}-\alpha\left(F+Ba_I\right)-\dfrac{\mathcal{M}}{g(S)}\operatorname{sign}S}{\mathcal{B^\star
        }}
    \end{align}
 where $g(S)=\theta+(1-\theta) e^{-\kappa|S|^\eta}$, $\eta$ is a strictly positive integer, $k\in \mathbb{R}_{+}$ and $0<\theta<1$, then sliding mode is enforced on the chosen surface $S$. Consequently, the time-to-go error asymptotically converges to zero, thereby achieving time-constrained interception under bounded control input.
\end{theorem}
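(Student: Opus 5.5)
The plan is a standard Lyapunov argument on the sliding variable, followed by a linear-system argument on the surface, with Theorem~\ref{th:theorem1} supplying the input bound. Take $V=\tfrac12 S^2$. Differentiating along \eqref{eq:S_dot} and substituting the command \eqref{eq:a_M} cancels $\mathcal{F^\star}$ and the term $\alpha(F+Ba_I)$, which leaves
\begin{align}
\dot V = S\dot S = -\frac{\mathcal{M}}{g(S)}|S|.
\end{align}
Because $0<\theta<1$ and $e^{-\kappa|S|^\eta}\in(0,1]$, the gain $g(S)$ lies in $[\theta,1]$. Hence $\dot V\le -\mathcal{M}|S| = -\mathcal{M}\sqrt{2}\,V^{1/2}$. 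The comparison lemma then gives that $S$ reaches zero in finite time, at most $|S(0)|/\mathcal{M}$, and stays there. This establishes that sliding mode is enforced, and that the reaching phase accelerates as $|S|$ grows because $g\to\theta$.

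On the surface $S=0$ the error obeys $\dot e=-\alpha e$. For $\alpha>0$ this gives $e(t)=e(t_s)e^{-\alpha(t-t_s)}$, so $t_{\mathrm{go}}\to t_d-t_{\mathrm{el}}$ asymptotically. Combining this with the TPNG interception guarantee from \cite{kumar2022true} (valid for $c\gg (V_I+V_T)/2$) yields $r\to0$ with the impact time approaching $t_d$.

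The bounded-input claim follows by invoking Theorem~\ref{th:theorem1}. For this I need $a_I^c$ to stay finite on the engagement interval, and that is the main obstacle. The command \eqref{eq:a_M} divides by $\mathcal{B^\star}$, which vanishes when
\begin{itemize}
\item $V_\theta=0$ (the collision triangle is already reached),
\item $r=0$ (at interception),
\item $V_r+2c=0$ (excluded by $c\gg V_I+V_T$), or
\item the saturation factor $1-(a_I/a_{\max})^\rho$ (resp.\ $a_{\min}$) is zero.
\end{itemize}
The last case is ruled out by Theorem~\ref{th:theorem1} itself: it confines $a_I\le\tilde a_{\max}<a_{\max}$ and $a_I\ge \tilde a_{\min}>a_{\min}$, so the factor is bounded below by a positive constant. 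This step needs care because it is circular: Theorem~\ref{th:theorem1} assumes boundedness of $a_I^c$. I would handle the circularity with a continuation argument on the maximal interval where $a_I^c$ is finite. On that interval the factor is bounded away from zero, and so the interval cannot terminate for saturation reasons.

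The cases $V_\theta=0$ and the denominator $V_\theta^2+V_r^2+2cV_r$ vanishing are harder. For these I would argue that with $e\neq0$ the trajectory stays away from the collision triangle, since on it $\dot e=F+Ba_I=0$ would force $e$ constant. I would also argue that $V_r<0$ keeps the denominator negative for large $c$. Together these keep $|\mathcal{B^\star}|$ bounded below and $\mathcal{F^\star}$ and $F$ bounded for $t<t_d$, since all states are bounded there. I would state nonvanishing of $\mathcal{B^\star}$ before interception as a standing singularity-free condition if it cannot be fully proven. With $a_I^c$ finite, Theorem~\ref{th:theorem1} gives $a_{\min}<a_I<a_{\max}$ throughout, which completes the argument.
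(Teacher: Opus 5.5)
Your Lyapunov argument is the paper's proof: the same $V=S^2/2$, the same cancellation giving $\dot V=-\mathcal{M}|S|/g(S)$, and the same conclusion $\dot e=-\alpha e$ on $S=0$. Using $g\le 1$ to get $\dot V\le-\mathcal{M}|S|$, and hence a reaching time of at most $|S(0)|/\mathcal{M}$, sharpens the paper's bare ``$\dot V<0$'' and is consistent with its later comparison \eqref{eq:tr_diff}. The paper then simply asserts that the input constraints hold. It relies tacitly on Theorem~\ref{th:theorem1} without checking that $a_I^c$ is finite, so you are right that this is the crux.

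Your continuation argument for that crux does not work. Write $\mathcal{B^\star}=B\,\Phi(a_I)$, where $\Phi$ is the bracketed factor in \eqref{eq:saturation_model}. By construction $\mathcal{F^\star}=\dot F+\dot Ba_I-\lambda Ba_I$, so substituting \eqref{eq:a_M} into \eqref{eq:saturation_model} gives
\[
\dot a_I=\Phi(a_I)\,a_I^c-\lambda a_I=\frac{1}{B}\Big[-\dot F-\dot B a_I-\alpha(F+Ba_I)-\frac{\mathcal{M}}{g(S)}\operatorname{sign}S\Big].
\]
Here $a_{\max}$, $a_{\min}$, $\rho$ and $\lambda$ no longer appear. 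The law inverts the saturation model away, and the achieved $a_I$ is exactly the acceleration an unconstrained design would prescribe. Nothing therefore stops $a_I$ from approaching $a_{\max}$. When it does, $\Phi(a_I)\to0$ while $\Phi(a_I)a_I^c=\dot a_I+\lambda a_I\ge\lambda a_I>0$ whenever $a_I$ is rising toward $a_{\max}$, so $a_I^c\to\infty$.

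Theorem~\ref{th:theorem1} cannot exclude this. Its margin $\tilde a_{\max}=a_{\max}[\chi/(\chi+\lambda a_{\max})]^{1/\rho}$ depends on $\chi=\sup|a_I^c|$ and tends to $a_{\max}$ as $\chi\to\infty$. On your maximal interval of finiteness it therefore gives no a priori lower bound on $\Phi$. Your claim that ``the interval cannot terminate for saturation reasons'' is exactly the step that fails.

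The bounded-input conclusion needs an explicit extra hypothesis, for example that the reaching-law-prescribed acceleration stays in a compact subset of $(a_{\min},a_{\max})$, or equivalently that $a_I^c$ is bounded. The paper's proof makes this assumption silently as well. Two further points affect your proposed singularity-free condition:
\begin{itemize}
\item $\mathcal{B^\star}$ carries the factor $r$, so no uniform lower bound on $|\mathcal{B^\star}|$ can hold up to interception.
\item Your collision-triangle argument only yields $V_\theta\neq0$ pointwise on $S=0$, because there $\dot e=\frac{2(V_r+2c)V_\theta}{(V_\theta^2+V_r^2+2cV_r)^2}(cV_\theta-ra_I)=-\alpha e\neq0$. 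It gives no bound away from zero, and nothing at all during the reaching phase.
\end{itemize}
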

\begin{proof}
    Consider the Lyapunov function candidate as $V=\dfrac{S^2}{2}$. On differentiating it with respect to time and utilizing \eqref{eq:S_dot}, one can get
    \begin{align}
        \dot V=S\dot S=S\left[ \mathcal{F^\star}+\alpha\left(F+Ba_I\right)+\mathcal{B^\star}a_I^c\right].
    \end{align}
On substituting \eqref{eq:a_M}, the above equation reduces to 
\begin{align}
    \dot V=-\dfrac{\mathcal{M}}{g(S)}|S|.
\end{align}
Since $g(S)$ is strictly positive function, $\dot V<0$ for all $S\neq 0$. Therefore, when the sliding mode $S$ is enforced, i.e., $S=0$, then \eqref{eq:sliding_manifold} reduces to $\dot e=-\alpha e$, which ensures that error in time-to-go $e$ and its rate $\dot e$ will converge to zero asymptotically, thereby leading to a time-constrained interception while satisfying input constraints. This concludes the proof.
\end{proof}
\begin{remark}
    Consider the case when $|S|$ increases, then it can be observed that $g(S)\to\theta$ which implies that $\frac{\mathcal{M}}{g(S)}\to\frac{\mathcal{M}}{\theta}>\mathcal{M}$. This indicates that, during the reaching phase, the effective control gain increases, resulting in a stronger attraction of the errors $e$ and $\dot e$ towards $S$, thereby promoting faster convergence. As $|S|$ decreases, $g(S)$ approaches unity and $\frac{\mathcal{M}}{g(S)} \to \mathcal{M}$. This implies that, in the vicinity of the sliding surface $S$, the effective control gain is gradually reduced.
\end{remark}
On substituting \eqref{eq:a_M} into \eqref{eq:S_dot}, it is evident that the reaching law used to drive the system to the sliding surface $S$ is
\begin{align}\label{eq:reaching_law}
    \dot{S} = -\dfrac{\mathcal{M}}{g(S)}\operatorname{sign}(S),
\end{align}
which differs from the standard reaching law  $\dot{S} = -\mathcal{M}\operatorname{sign}(S)$. Integration of this reaching law yields the reaching time $t_{r_1} = \dfrac{|S(0)|}{\mathcal{M}}$ to the sliding surface $S$. Under the considered reaching law, let $t_{r_2}$ be the reaching time to the sliding surface $S$. One can write \eqref{eq:reaching_law} as
\begin{align}
\dot{S}\left[\theta + (1-\theta)e^{-\kappa |S|^{\eta}}\right]
= -\mathcal{M}\,\operatorname{sign}(S).
\end{align}
Integrating the above equation from $0$ to $t_{r_2}$ and considering $S(t_{r_2})=0$, one can obtain
\begin{align}\label{eq:tr_2_prelim}
t_{r_2} &= \frac{1}{\mathcal{M}}
\left[
\theta |S(0)|
+ (1-\theta)
\int_{0}^{S(0)}
\operatorname{sign}(S)e^{-\kappa |S|^{\eta}}\, dS
\right].
\end{align}
Consider the case when $S\leq0$ for $t\leq t_{r_2}$, then one can write
\begin{align}\label{eq:left}
\int_{0}^{S(0)}
\operatorname{sign}(S)e^{-\kappa |S|^{\eta}}\, dS
&= -\int_{0}^{S(0)} e^{-\kappa |S|^{\eta}}\, dS \nonumber\\
&= \int_{0}^{-S(0)} e^{-\kappa |S|^{\eta}}\, dS.
\end{align}
Similarly, if $S\geq0$ for $t\leq t_{r_2}$, then 
\begin{align}\label{eq:right}
\int_{0}^{S(0)}
\operatorname{sign}(S)e^{-\kappa |S|^{\eta}}\, dS
= \int_{0}^{S(0)}
e^{-\kappa |S|^{\eta}}\, dS.
\end{align}
Combining \eqref{eq:left} and \eqref{eq:right}, one can write
\begin{align}
    \int_{0}^{S(0)}
\operatorname{sign}(S)e^{-\kappa |S|^{\eta}}\, dS=\int_{0}^{|S(0)|}
e^{-\kappa |S|^{\eta}}\, dS
\end{align}
 Substituting this result in \eqref{eq:tr_2_prelim} yields
 \begin{align}\label{eq:tr2}
t_{r_2} &= \frac{1}{\mathcal{M}}
\left[
\theta |S(0)|
+ (1-\theta)
\int_{0}^{|S(0)|}
e^{-\kappa |S|^{\eta}}\, dS
\right]. \end{align}
To facilitate a meaningful comparison between the standard
reaching law and the considered reaching law, subtracting $t_{r_1}$
from the above expression yields
\begin{align}\label{eq:tr_diff}
    t_{r_2}-t_{r_1}=\dfrac{(1-\theta)}{\mathcal{M}}\int_0^{|S(0)|} \left[e^{-\kappa|S|^\eta}-1\right]dS
\end{align}
Since $e^{-\kappa|S|^\eta}-1<0$ always, this implies that the considered reaching law reaches faster than the standard reaching law to the sliding surface $S$.
\begin{comment}
    Upon integrating \eqref{eq:reaching_law}, and utilizing $t_{r_1}$, one can obtain (for brevity, detailed steps are omitted)
    %begin{align}
    t_{r_2}-t_{r_1}=\dfrac{(1-\theta)}{\mathcal{M}}\int_0^{|S(0)|} \left[e^{-\kappa|S|^\eta}-1\right]dS
    %\end{align}
Since $e^{-\kappa|S|^\eta}-1<0$ always, this implies that considered reaching law reaches faster than standard reaching law to sliding surface $S$.
\end{comment}

\begin{figure}[h!]
    \centering
    \includegraphics[width=\linewidth]{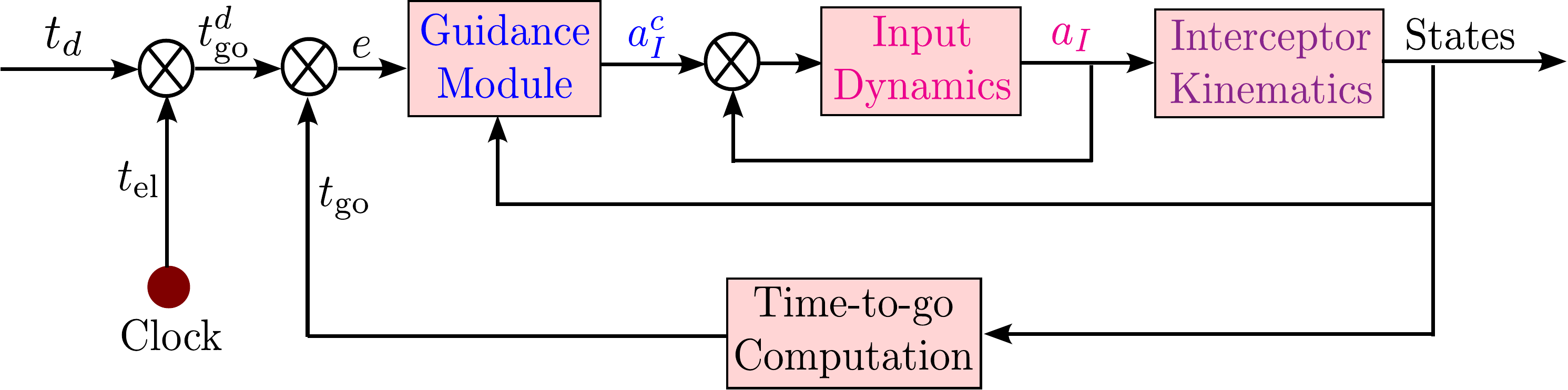}
    \caption{Schematic of the proposed guidance strategy.}
    \label{fig:schematic}
\end{figure}
The detailed schematic representation of the proposed guidance strategy is illustrated in \Cref{fig:schematic}.  Based on the time-to-go error and relevant engagement variables, the guidance module generates the commanded 
acceleration $a_I^c$. To enforce the acceleration constraints, $a_I^c$ is passed through \eqref{eq:saturation_model} to yield a bounded control input satisfying $a_{\min} < a_I < a_{\max}$. This bounded acceleration is subsequently applied to the interceptor kinematics to update the flight path angle. The updated engagement parameters are then fed back to the guidance module and also to compute the time-to-go error, and the loop continues until all desired interception objectives are satisfied.

 \section{Simulations}\label{sec:simulation}
In this section, the efficacy of the proposed guidance strategy is demonstrated through various engagement scenarios. The initial speeds of the interceptor and target are set to $V_I = 70~\mathrm{m/s}$ and $V_T = 50~\mathrm{m/s}$, respectively. The initial separation and LOS angle are taken as $5~\mathrm{km}$ and $0^\circ$, respectively. In all trajectory plots, circle markers denote the launch points while cross markers indicate the interception points. The values of parameters $n$ and $\lambda$ are taken as $1$ and $0.15$ respectively. Moreover, the controller parameters $\alpha$, $\mathcal{M}$, $\theta$, $\kappa$ and $\eta$ are set to $1.2$, $1$, $0.6$, $5$ and $1$ respectively. The bounds on the acceleration are taken as $-4~\mathrm{m/s^2}<a_I<8~\mathrm{m/s^2}$. The value of $c$ is taken as $c=3(V_I+V_T)$.
\subsection{Different Impact Times $t_d$}
\begin{figure}[h!]
    \centering
    \begin{subfigure}[t]{0.475\linewidth}
        \centering
        \includegraphics[width=\linewidth]{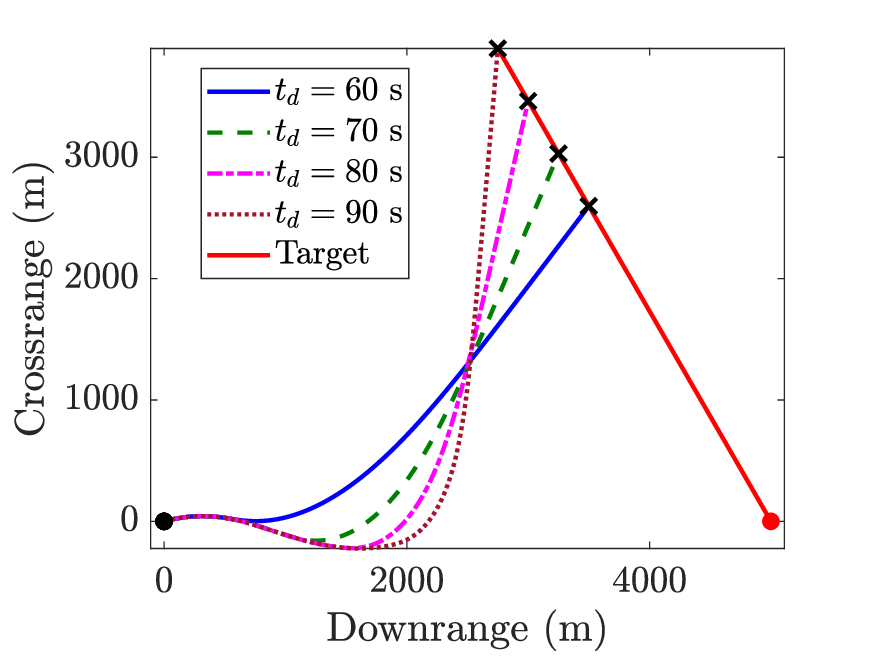}
        \caption{Trajectory.}
        \label{fig:traj1}
    \end{subfigure}
        \begin{subfigure}[t]{0.475\linewidth}
        \centering
        \includegraphics[width=\linewidth]{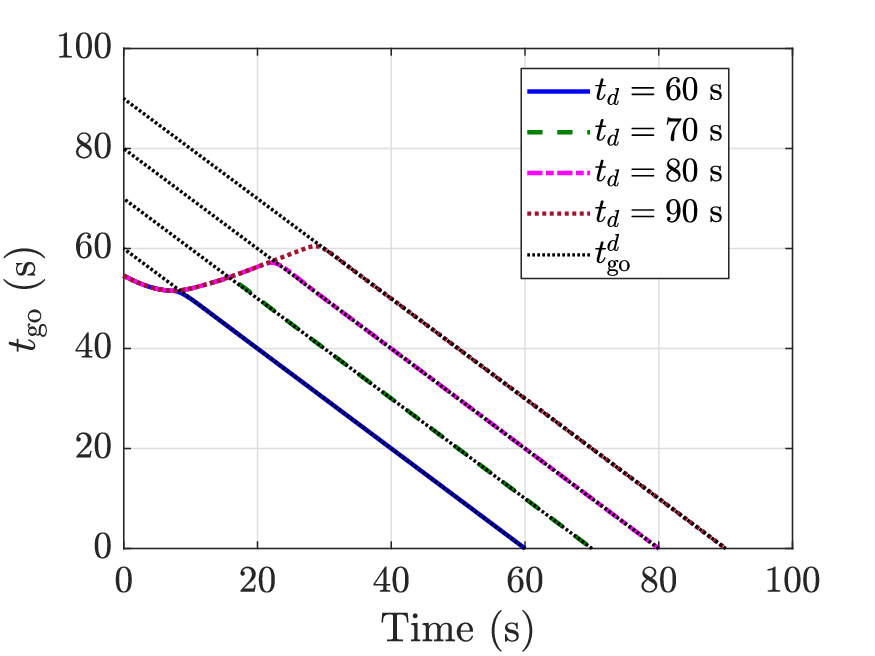}
        \caption{Time-to-go.}
        \label{fig:tgo1}
    \end{subfigure}
        \begin{subfigure}[t]{0.475\linewidth}
        \centering
        \includegraphics[width=\linewidth]{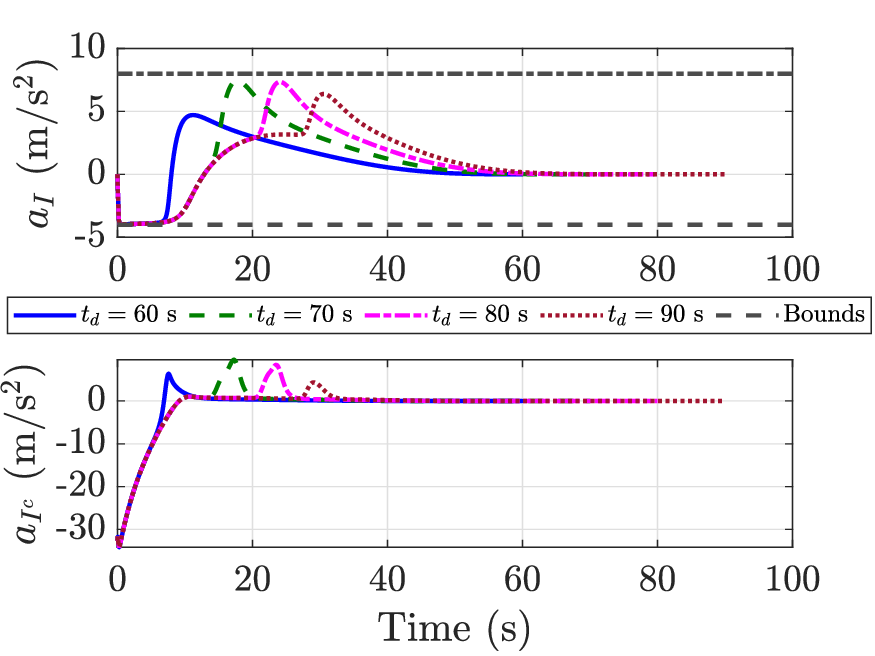}
        \caption{Acceleration ($a_I$ and $a_I^c$).}
        \label{fig:am1}
    \end{subfigure}
    \begin{subfigure}[t]{0.475\linewidth}
        \centering
        \includegraphics[width=\linewidth]{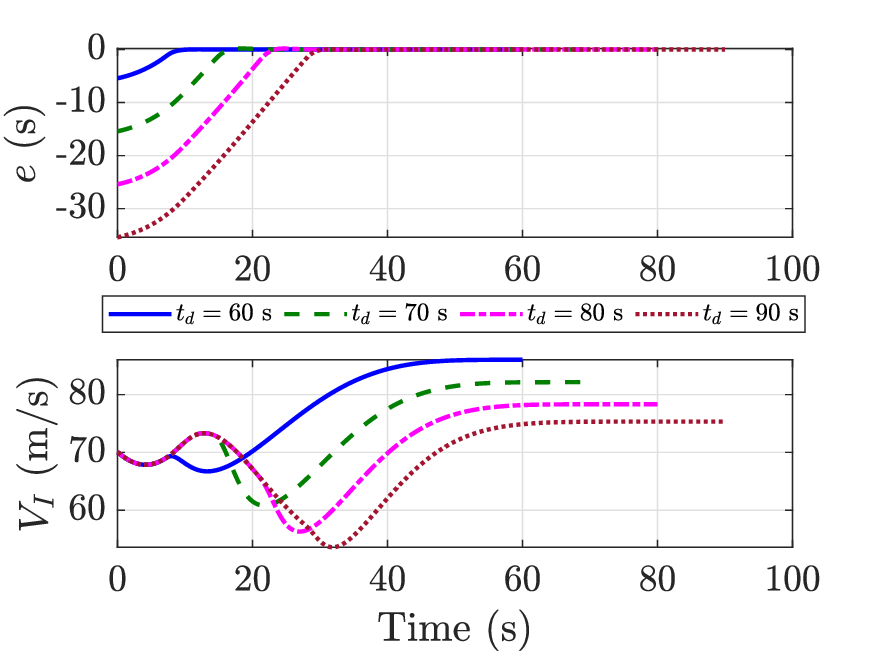}
        \caption{Error ($e$) and velocity.}
        \label{fig:e1}
    \end{subfigure}
\caption{Performance evaluation for constant velocity target interception at different $t_d$.}
    \label{fig:diif_td}
\end{figure}
To begin, a scenario is examined where an interceptor is required to intercept at the target at different impact time  $t_d \in \{60, 70, 80, 90\}~\mathrm{s}$, with initial flight path angles of $\gamma_I = 15^\circ$ and $\gamma_T = 120^\circ$ for the interceptor and target, respectively. The corresponding results are illustrated in \Cref{fig:diif_td}, which demonstrates successful target interception at different specified impact times. The corresponding trajectories are depicted in \Cref{fig:traj1}. The time-to-go profiles are illustrated in \Cref{fig:tgo1}, and it can be observed from \Cref{fig:tgo1} and \Cref{fig:e1} that all errors converge to zero well before $t_d$. From \Cref{fig:am1}, although the commanded acceleration $a_I^c$ exceeds the prescribed bounds, the proposed input saturation dynamics effectively ensure that the achieved acceleration adheres to the prescribed limits $a_{\min} < a_I < a_{\max}$ throughout the engagement. Following error convergence, it can be observed that $a_I$ gradually decreases and eventually converges to zero, after which the velocity remains constant as seen in \Cref{fig:e1}.
\subsection{Different  Initial Heading Angle $\gamma_{I_0}$}
In this subsection, the effectiveness of the proposed strategy is demonstrated for the case of interception of a constant velocity target at $t_d=70~\mathrm{s}$ for different initial interceptor heading angles of $\gamma_{I_0}\in\{5^\circ,10^\circ,20^\circ,30^\circ\}$. All other parameters are taken as the same as in the previous case.
\begin{figure}[h!]
    \centering
    \begin{subfigure}[t]{0.475\linewidth}
        \centering
        \includegraphics[width=\linewidth]{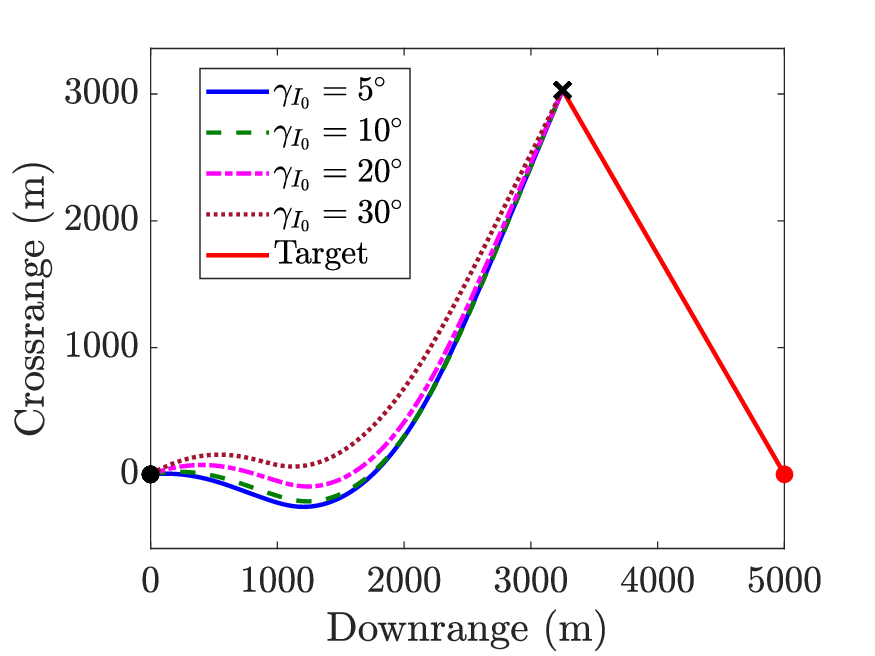}
        \caption{Trajectory.}
        \label{fig:traj2}
    \end{subfigure}
        \begin{subfigure}[t]{0.475\linewidth}
        \centering
        \includegraphics[width=\linewidth]{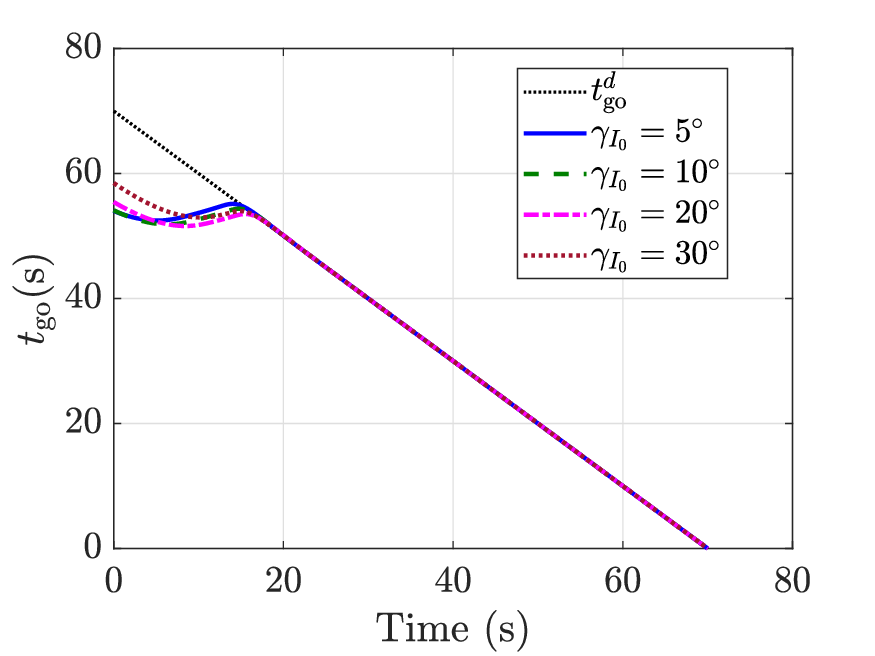}
        \caption{Time-to-go.}
        \label{fig:tgo2}
    \end{subfigure}
        \begin{subfigure}[t]{0.475\linewidth}
        \centering
        \includegraphics[width=\linewidth]{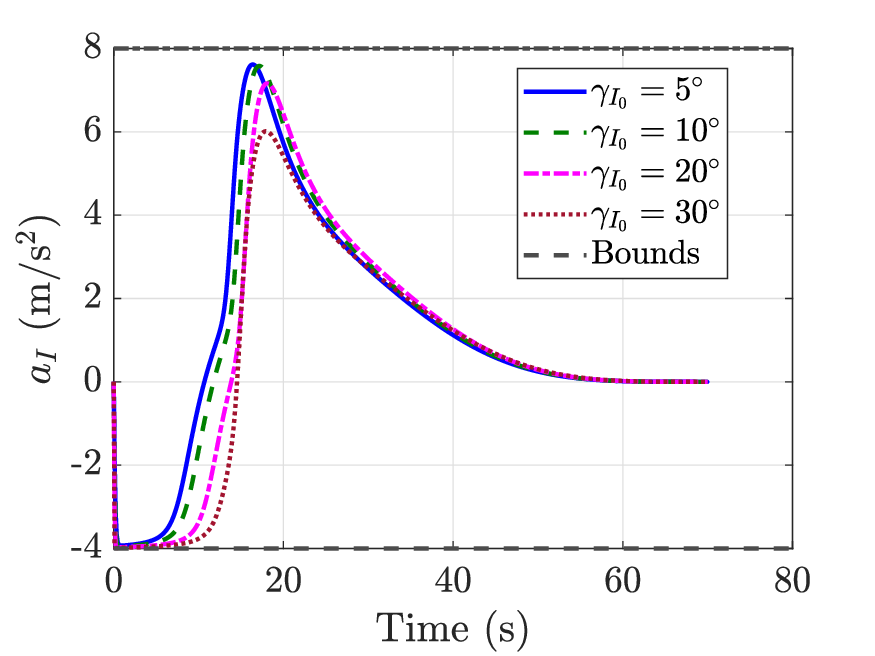}
        \caption{Acceleration.}
        \label{fig:am2}
    \end{subfigure}
    \begin{subfigure}[t]{0.475\linewidth}
        \centering
        \includegraphics[width=\linewidth]{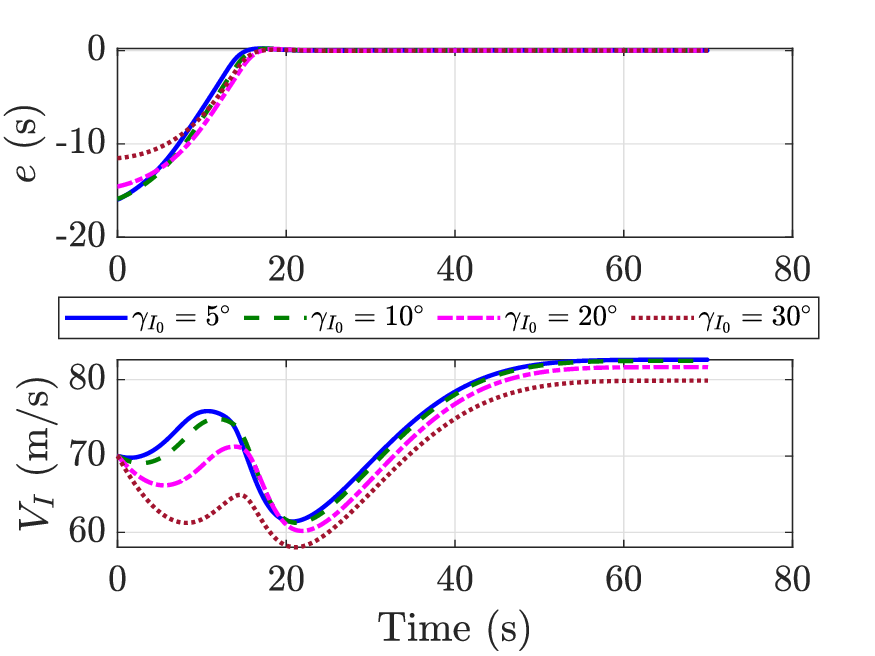}
        \caption{Error ($e$) and velocity.}
        \label{fig:e2}
    \end{subfigure}
\caption{Performance evaluation for constant velocity target interception with different initial heading angles.}
    \label{fig:Diff_Gamma}
\end{figure}

The results are illustrated in \Cref{fig:Diff_Gamma}. From \Cref{fig:traj2}, it can be observed that all interceptors successfully intercept the target at the prescribed time $t_d = 60~\mathrm{s}$ while strictly adhering to the acceleration constraints, as evident from \Cref{fig:am2}. \Cref{fig:tgo2} and \Cref{fig:e2} confirm that the time-to-go error converges to zero well before $t = 22~\mathrm{s} < t_d$ for all interceptors. The corresponding speed profiles depicted in \Cref{fig:e2} exhibit consistent behavior with the previous case.
\subsection{Different Target Heading Angles $\gamma_T$ and Stationary Target }
To further demonstrate the effectiveness of the proposed strategy, simulations are carried out for the interception of a stationary target and a constant-velocity, non-maneuvring target for different heading angles $\gamma_T\in\{70^\circ,90^\circ,135^\circ\}$ at $t_d=130~\mathrm{s}$. The interceptor initial heading angle is taken as $\gamma_{I_0}=15^\circ$. All other parameters remain the same as in the previous cases.
\begin{figure}[h!]
    \centering
    \begin{subfigure}[t]{0.475\linewidth}
        \centering
        \includegraphics[width=\linewidth]{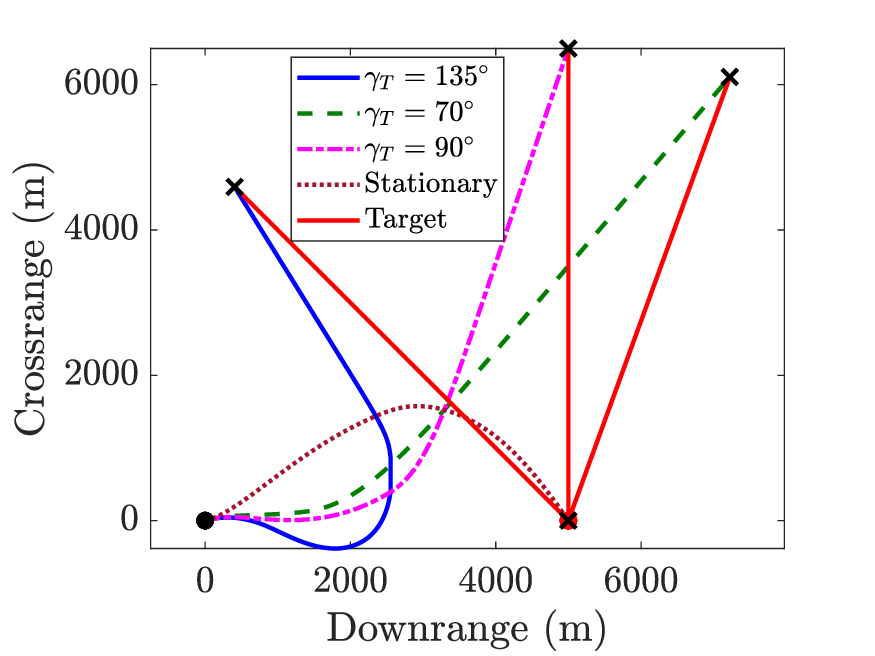}
        \caption{Trajectory.}
        \label{fig:traj3}
    \end{subfigure}
        \begin{subfigure}[t]{0.475\linewidth}
        \centering
        \includegraphics[width=\linewidth]{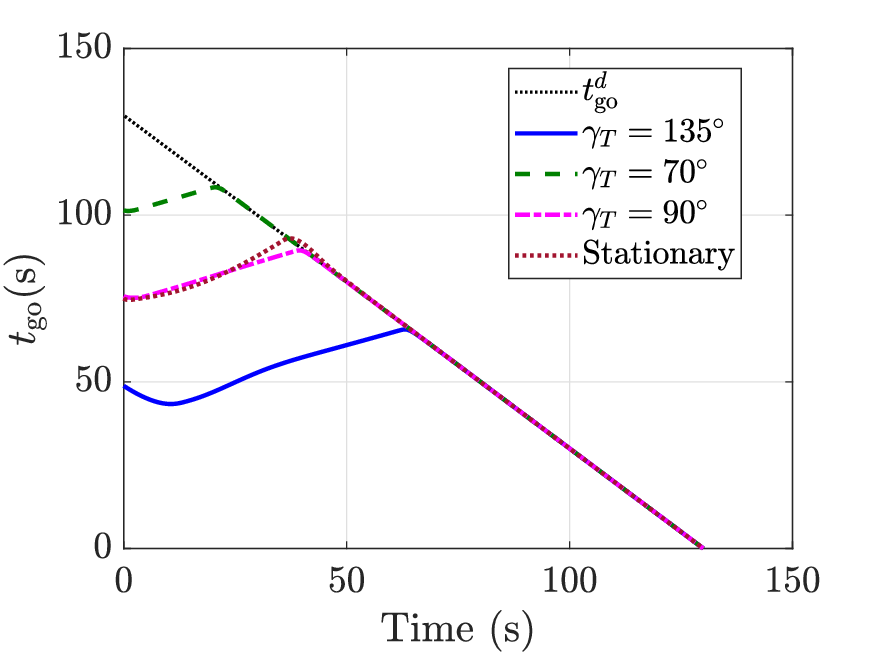}
        \caption{Time-to-go.}
        \label{fig:tgo3}
    \end{subfigure}
        \begin{subfigure}[t]{0.475\linewidth}
        \centering
        \includegraphics[width=\linewidth]{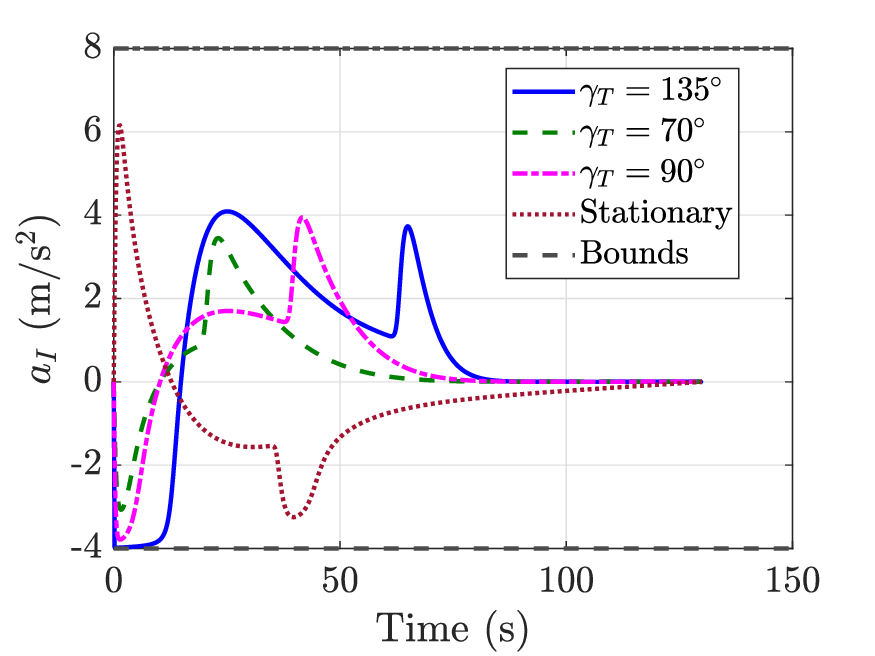}
        \caption{Acceleration.}
        \label{fig:am3}
    \end{subfigure}
    \begin{subfigure}[t]{0.475\linewidth}
        \centering
        \includegraphics[width=\linewidth]{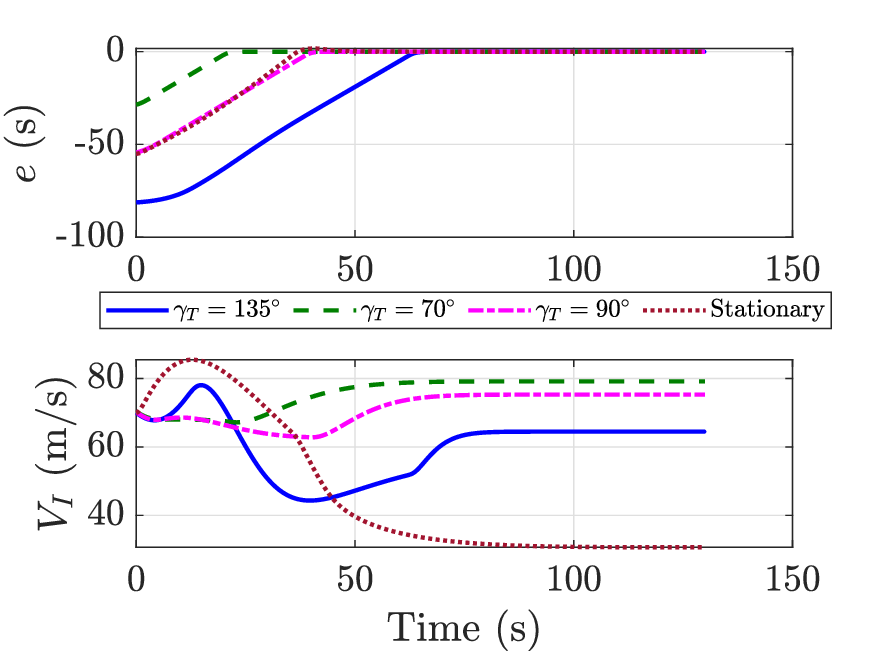}
        \caption{Error ($e$) and velocity.}
        \label{fig:e3}
    \end{subfigure}
\caption{Performance evaluation for interception of a stationary target and a constant velocity target with different heading angles.}
\label{fig:diff_GT}
\end{figure}
\Cref{fig:diff_GT} illustrates the obtained results. From \Cref{fig:traj3}, successful interception at the prescribed time $t_d = 130~\mathrm{s}$ is observed across all cases. \Cref{fig:tgo3} and \Cref{fig:e3} confirm that the time-to-go errors converges to zero well ahead of $t_d$.  The corresponding acceleration profiles are presented in \Cref{fig:am3}, from which it is evident that the acceleration constraints are strictly adhered to throughout the engagement. The speed profiles in \Cref{fig:e3} follow trends consistent with the previous scenario, with the stationary target case exhibiting a gradual reduction in speed that stabilizes to a constant value as the acceleration demand approaches zero.

\subsection{Comparison}
To further validate the efficacy of the proposed guidance strategy, a comparative evaluation is performed against the method presented in \cite{kumar2022true} for the interception of a constant-velocity, non-maneuvering target at $t_d = 55~\mathrm{s}$. The interceptor and target speeds are set to $V_I = 250~\mathrm{m/s}$ and $V_T = 200~\mathrm{m/s}$, respectively, consistent with the high-speed engagement scenario considered in \cite{kumar2022true}, with an initial separation of $r = 10~\mathrm{km}$. To ensure a meaningful quantitative comparison, the gains of both strategies are tuned to yield approximately the same convergence time. The gains $\mathcal{M}_1$, $\mathcal{M}_2$, and $\alpha$ for the method in \cite{kumar2022true} are set to $0.2$, $0.2$, and $51/53$, respectively, while the gains $\theta$ and $\kappa$ for the proposed strategy are set to $0.5$ and $10$, respectively. The acceleration bounds are taken as $-6~\mathrm{g} < a_I < 6~\mathrm{g}$, where $\mathrm{g}$ denotes the acceleration due to gravity. All remaining parameters are kept the same as in the previous scenario. The standard-reaching law is also included in the comparison to demonstrate the effectiveness of the considered exponential-reaching law.

\begin{figure}[h!]
    \centering
    \begin{subfigure}[t]{0.475\linewidth}
        \centering
        \includegraphics[width=\linewidth]{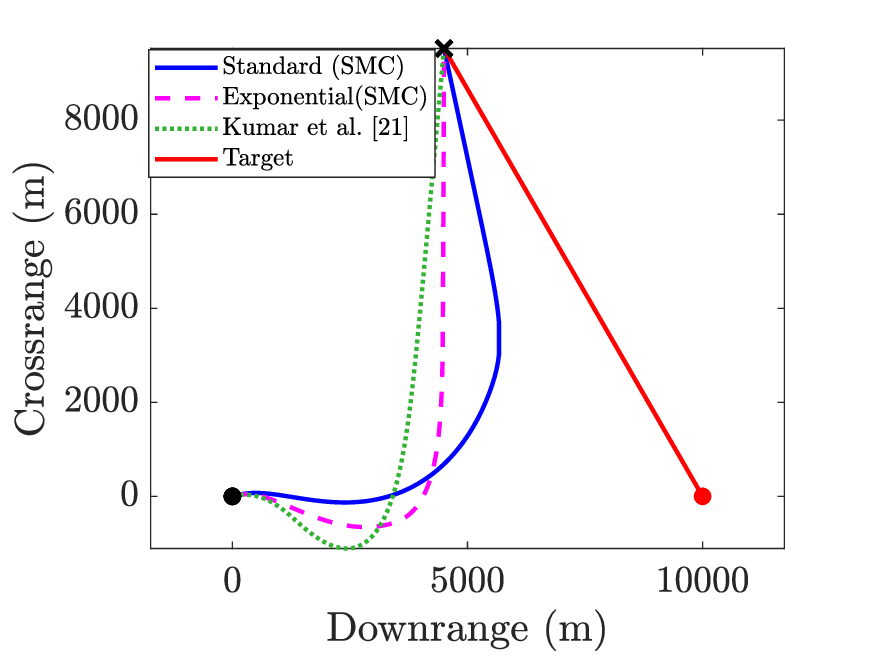}
        \caption{Trajectory.}
        \label{fig:traj5}
    \end{subfigure}
        \begin{subfigure}[t]{0.475\linewidth}
        \centering
        \includegraphics[width=\linewidth]{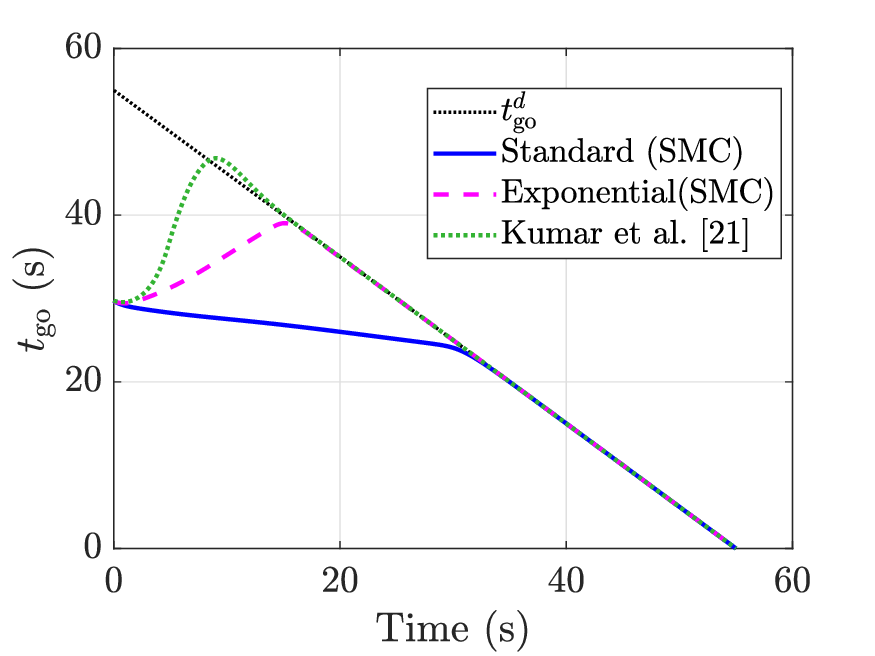}
        \caption{Time-to-go.}
        \label{fig:tgo5}
    \end{subfigure}
        \begin{subfigure}[t]{0.475\linewidth}
        \centering
        \includegraphics[width=\linewidth]{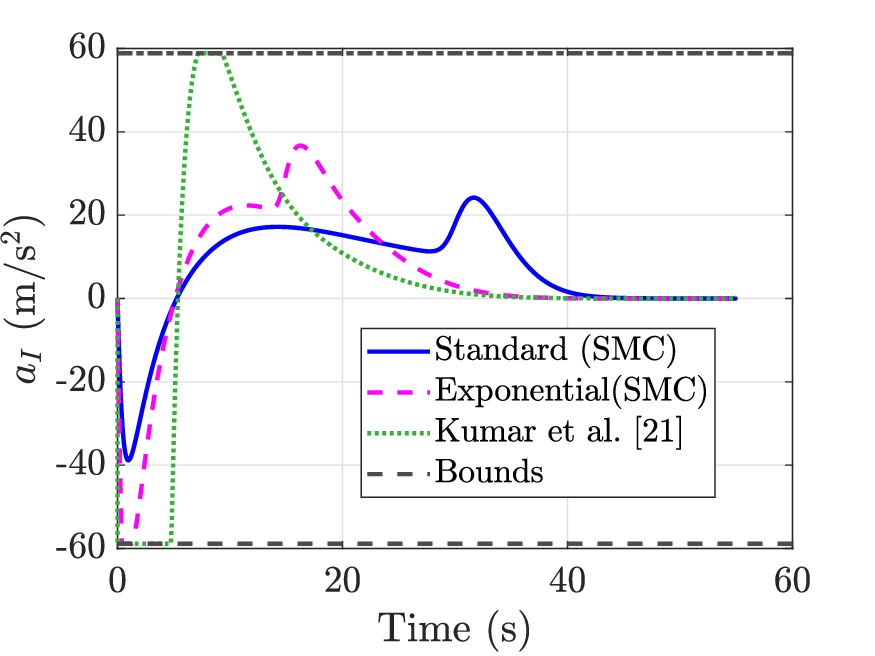}
        \caption{Acceleration.}
        \label{fig:am5}
    \end{subfigure}
    \begin{subfigure}[t]{0.475\linewidth}
        \centering
        \includegraphics[width=\linewidth]{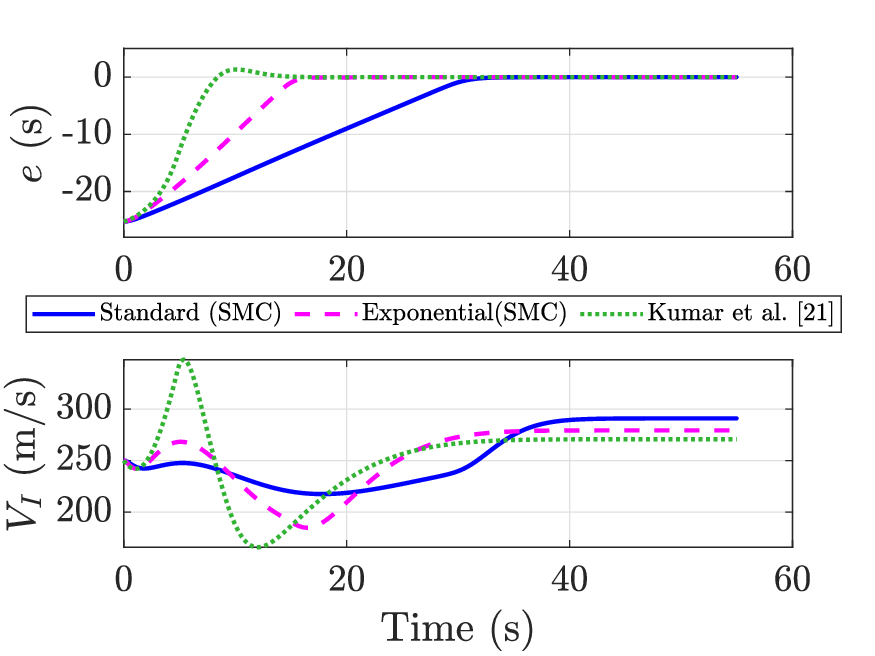}
        \caption{Error ($e$) and velocity.}
        \label{fig:e5}
    \end{subfigure}
\caption{Comparison of the proposed guidance strategy with the strategy in \cite{kumar2022true}.}
\label{fig:comparison}
\end{figure}

The comparative results are illustrated in \Cref{fig:comparison}. As evident from \Cref{fig:traj5}, the interceptor successfully intercepts the target at the prescribed time $t_d = 55~\mathrm{s}$ under all considered guidance strategies. It is clear from \Cref{fig:tgo5} and \Cref{fig:e5} that the considered exponential reaching converges faster than the standard reaching law for the same value of $\mathcal{M}$, confirming the result \eqref{eq:tr_diff}. Moreover, one can also observe that the convergence times of the strategy in \cite{kumar2022true} and the proposed strategy are approximately the same.  The corresponding acceleration and speed profiles are illustrated in \Cref{fig:am5} and \Cref{fig:e5}, respectively. It can be observed that the proposed guidance strategies based on the standard and exponential reaching laws exhibit smoother acceleration profiles compared to \cite{kumar2022true}, owing to the explicit incorporation of bounded control inputs within the guidance design, unlike the unbounded input assumption adopted in \cite{kumar2022true}. For a quantitative comparison, the integral control effort is computed and tabulated in \Cref{tab:control_comparison}.
\begin{table}[h!]
\centering
\caption{Comparison of convergence time and control effort}
\label{tab:control_comparison}
\begin{tabular}{lcc}
\hline
\textbf{Case} & \textbf{Convergence Time (s)} & $\displaystyle J=\int_{0}^{t_d} a_I^2 \, dt$ (m$^2$/s$^3$) \\
\hline
Standard SMC        & 34.0 & 10152.5848 \\
Exponential SMC     & 17.1 & 18238.4510 \\
Kumar et al.~\cite{kumar2022true} & 15.5 & 38985.8903 \\
\hline
\end{tabular}
\end{table}

From \Cref{tab:control_comparison}, it can be observed that the exponential reaching law exhibits a $79\%$ higher control effort compared to the standard reaching law. This can be attributed to the faster convergence of the exponential reaching law, which achieves convergence in approximately half the time required by the standard reaching law. Compared to \cite{kumar2022true}, the proposed strategy demonstrates a notable $53\%$ reduction in control effort. This significant reduction highlights the benefit of explicitly incorporating input bounds within the guidance design, which prevents excessive control commands, results in more energy-efficient engagement, and guarantees overall closed-loop behavior.
\section{Conclusions and Future Work}\label{sec:conclusions}
A nonlinear guidance strategy for intercepting a constant-velocity, non-maneuvering target with input bounds explicitly incorporated within the guidance design was developed in the present work. By adopting TPNG with an exact time-to-go formulation, the proposed approach avoids linearization and small-angle assumptions, ensuring applicability for a wide range of target motion. A first-order, input-affine, asymmetric, and differentiable saturation model was employed to embed input bounds into a sliding mode-based guidance law, enabling accurate time-constrained interception with bounded control and guaranteed closed-loop stability. Simulation results validated the effectiveness of the proposed strategy across multiple engagement scenarios. Comparison with existing methods demonstrates that the proposed guidance design yields more energy-efficient engagements while guaranteeing closed-loop stability. Future work will focus on extending the proposed framework to maneuvering targets and incorporating rate constraints into the guidance design.

    \bibliographystyle{IEEEtran}
    \bibliography{references_aacc}
%\bibliography{Hybrid_Guidance/references_h,Stationary_2D_DPTO/references}

\end{document}